\newcommand{\keywords}[1]{\par\addvspace\baselineskip
\noindent\keywordname\enspace\ignorespaces#1}
\begin{document}

\title{Two-sources Randomness Extractors for Elliptic Curves}
\titlerunning{Two-sources Randomness Extractors for Elliptic Curves}

\author{Abdoul Aziz Ciss}
\authorrunning{Abdoul Aziz Ciss }
\institute{Laboratoire de Traitement de l'Information et Syst\`emes Intelligents,
\\ \'Ecole Polytechnique de Thi\`es, S\'en\'egal \\
\mailsa}
\maketitle
\bibliographystyle{plain}

\begin{abstract}
This paper studies the task of  two-sources randomness extractors for elliptic curves defined over a finite field $K$, where $K$ can be a prime or a binary field. In fact, we  introduce new constructions of functions over elliptic curves which take in input two random points from two different subgroups. In other words, for a given elliptic curve $E$ defined over a finite field $\mathbb{F}_q$ and two random points $P \in \mathcal{P}$ and $Q\in \mathcal{Q}$, where $\mathcal{P}$ and $\mathcal{Q}$ are two subgroups of  $E(\mathbb{F}_q)$, our function extracts the least significant bits of the abscissa of the point $P\oplus Q$ when $q$ is a large prime, and the $k$-first $\mathbb{F}_p$ coefficients of the abscissa of the point $P\oplus Q$ when $q = p^n$, where $p$ is a prime greater than $5$. We show that the extracted bits are close to uniform.

Our construction extends some interesting randomness extractors for elliptic curves, namely those defined in \cite{op} and \cite{ciss1,ciss2}, when $\mathcal{P} = \mathcal{Q}$. The proposed constructions can be used  in any cryptographic schemes which require extraction of random bits from two sources over elliptic curves, namely in key exchange protocol , design of strong pseudo-random number generators, etc.

\keywords{Elliptic curves, randomness extractor, key derivation, pseudorandom generator, bilinear sums}
\end{abstract}

\section{Introduction}
A deterministic randomness extractor  for an elliptic curve is a function which allows to produce close to uniform random bit-string from a random point of the elliptic curve. The main difficulty of extracting randomness in elliptic curve points is to find suitable and explicit constructions for such function, ie. computable in polynomial time by a Turing Machine.

The task of randomness extraction from a point of an elliptic curve has several cryptographic applications. For example, it can
be used in key derivation functions, in key exchange protocols like Diffie-Hellman \cite{dh} and to design  cryptographically secure pseudorandom number generators \cite{Trevisan}.

For instance, by the end of Diffie-Hellman key exchange protocol \cite{dh}, Alice and Bob agree on a common secret $K_{AB} \in G$,
where $G$ is a cryptographic cyclic group,  which is indistinguishable from another element of $G$ under the decisional Diffie-Hellman assumption \cite{ddh}. The secret key used for encryption or authentication of data has to be indistinguishable from a uniformly random bit-string. Hence, the common secret $K_{AB}$ cannot be directly used as a session key.

A classical  solution is the use of a hash function to map an element of the group $G$ onto a uniformly random bit-string of fixed length.  However, the indistinguishability cannot be proved under the decisional Diffie-Hellman assumption. In this case,  it is necessary to appeal to the Random Oracle or to other technics. Many results in this direction can be found in  \cite{re,pr}. An alternative to hash function is to use a deterministic extractor when $G$ is the group of points of an elliptic  curve  \cite{op,theseciss,ciss1,ciss2,qe,be,gu}. These constructions use exponential sums to bound the statisticall distance.

In this paper, we introduce two new constructions of two-sources randomness extractors for elliptic curves defined over finite field. More precisely, we deal with  finite fields $\mathbb{F}_p$ for large prime $p$ and finite fields $\mathbb{F}_q$ where $q = p^n$. Consider an elliptic curve $E$ defined over a finite field $\mathbb{F}_p$, with $p > 5$, and $\mathcal{P}$ and $\mathcal{Q}$ be two distinct subgroups of $E(\mathbb{F}_q)$. For given two points $P\in\mathcal{P}$ and $Q\in\mathcal{Q}$, the first extractor outputs the $k$-least significant bits of the abscissa of the point $P\oplus Q$. We show that the extracted bits are indistinguishable from a random bit-string  of length $k$. In fact, we use bilinear exponential sums, recently proposed by Ahmadi and Shparlinski \cite{Ahmadi} to bound the the statistical distance.

We use the same technique to defined a two-source randomness extractor for elliptic curves defined over finite fields $\mathbb{F}_q$, where $q = p^n$. The proposed function extracts the $k$-first $\mathbb{F}_p$ coefficients of the abscissa of the point $P\oplus Q$.

We organize the paper as follows : the next section recalls some basic notion on theory of randomness extraction, namely tools for measuring randomness : collision probability, statistical distance, min-entropy, exponential, character  sums over finite fields and elliptic curves, in particular we recall fundamental results  on bilinear exponential sums over elliptic curves we use in this paper. We also give some previous results related to the randomness extraction in elliptic curves when working only one subgroup. Section 3 introduces our first contribution, ie. a new construction of a two-source deterministic randomness extractor for elliptic curves defined over prime fields. An analogue of this extractor for elliptic curves defined over $\mathbb{F}_{p^n}$ is given in Section 4.

\section{Preliminaries}
\subsection{Deterministic extractor}

\begin{definition} [Collision probability] Let $S$ be a finite set and  $X$ be an $S$-valued random variable. The collision probability of $X$, denoted by $Col(X)$, is the probability
$$Col(X)=\displaystyle\sum_{s\in S}\mathrm{Pr}[X=s]^2$$
If $X$ and $X'$ are identically distributed random variables on $S$, the collision probability of $X$ is interpreted as $Col(X)=\mathrm{Pr}[X=X']$
\end{definition}

\begin{definition}[Statistical distance] Let $X$ and $Y$ be $S$-valued random variables, where $S$ is a finite set. The statistical
distance $\Delta (X,Y)$ between $X$ and $Y$ is
\begin{center}
 $\Delta (X,Y)=\frac{1}{2}\displaystyle\sum_{s\in S} \left|\mathrm{Pr}[X=s]-\mathrm{Pr}[Y=s]\right|$
\end{center}
Let $U_S$ be a random variable uniformly distributed on $S$. Then a random variable $X$ on $S$ is said to be $\delta$-uniform if
\begin{center}
$\Delta (X,Y)\leq \delta$
\end{center}
\end{definition}
An equivalent definition is that $|X(A) - Y(A)| \leq \epsilon$ for every event $A \subseteq S$, which means that the two distributions are almost indistinguishable.
\begin{lemma}
Let $S$ be a finite set and let  $(\alpha_x)_{x\in S}$ be a sequence of real numbers. Then,
\begin{equation}\label{eqcs}
\frac{(\sum_{x\in S}|\alpha_x|)^2}{|S|} \leq \sum_{x\in  S} \alpha_x^2.
\end{equation}
\end{lemma}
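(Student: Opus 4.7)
The plan is to recognize the inequality as a direct consequence of the Cauchy--Schwarz inequality in $\mathbb{R}^{|S|}$. Specifically, I would apply Cauchy--Schwarz to the two vectors $u = (|\alpha_x|)_{x \in S}$ and $v = (1)_{x \in S}$. Their inner product is $\sum_{x \in S} |\alpha_x|$, the squared norm of $u$ is $\sum_{x \in S} \alpha_x^2$, and the squared norm of $v$ is $|S|$. Cauchy--Schwarz then gives
$$\Bigl(\sum_{x \in S}|\alpha_x|\Bigr)^2 \leq |S| \cdot \sum_{x \in S}\alpha_x^2,$$
and dividing both sides by $|S|$ yields the claim.

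Equivalently, one may phrase the same argument through Jensen's inequality applied to the convex function $t \mapsto t^2$ and the uniform probability measure on $S$: the expectation of $|\alpha_x|^2$ dominates the square of the expectation of $|\alpha_x|$, which after multiplying by $|S|^2$ is exactly the stated inequality. Either formulation is completely standard and produces a one-line proof.

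There is no genuine obstacle here; the only thing to be careful about is the choice of the auxiliary vector $v = (1,\ldots,1)$, which is what introduces the factor $|S|$ in the denominator and makes the inequality nontrivial. I would present the Cauchy--Schwarz version because it is the most transparent and matches the way the lemma will typically be invoked later (to bound an $\ell^1$ sum by an $\ell^2$ sum, which is the standard step when converting collision-probability bounds into statistical-distance bounds via the definitions given immediately above).
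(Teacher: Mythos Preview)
Your proposal is correct and matches the paper's own proof essentially verbatim: the paper also applies Cauchy--Schwarz to the vectors $(|\alpha_x|)_{x\in S}$ and $(1)_{x\in S}$, obtaining $\sum_{x\in S}|\alpha_x| \leq \sqrt{|S|}\sqrt{\sum_{x\in S}\alpha_x^2}$ and then squaring. There is nothing to add.
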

\emph{Proof.}
This inequality is a direct consequence of Cauchy-Schwarz inequality:
$$\sum_{x\in S}|\alpha_x| = \sum_{x\in S}|\alpha_x|.1 \leq \sqrt{\sum_{x\in  S} \alpha_x^2}\sqrt{\sum_{x\in  S} 1^2} \leq \sqrt{|S|}\sqrt{\sum_{x\in  S} \alpha_x^2}.$$
The result can be deduced easily.

If  $X$ is an $S$-valued random variable and if we consider that $\alpha_x = \mathrm{Pr}[X=x]$, then
\begin{equation}\label{eqcol}
\frac{1}{|S|} \leq Col(X),
\end{equation}
since the sum of probabilities is 1 and since $Col(X)=\displaystyle\sum_{x\in S}\mathrm{Pr}[X=x]^2$.

The following lemma gives an explicit relation between the statistical distance and collision probability.
\begin{lemma}\label{colision}
Let $X$ be a random variable over a finite $S$ of size $|S|$ and $\delta=\Delta (X,U_S)$ be the statistical distance between $X$
and $U_S$, the uniformly distributed  random variable over $S$. Then,
\begin{center}
$Col(X)\geq \displaystyle\frac{1+4\delta^2}{|S|}$
\end{center}
\end{lemma}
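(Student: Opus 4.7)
The plan is to expand the collision probability around the uniform distribution and then invoke the Cauchy--Schwarz lemma (equation (\ref{eqcs})) that was just proved, applied to the deviations $p_x - 1/|S|$.

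First I would set $p_x = \mathrm{Pr}[X=x]$ and $u = 1/|S|$, and write
$$Col(X) = \sum_{x\in S} p_x^2 = \sum_{x\in S}\bigl((p_x - u) + u\bigr)^2.$$
Expanding the square yields three terms: $\sum_x (p_x-u)^2$, a cross term $2u\sum_x (p_x - u)$, and $\sum_x u^2 = |S|u^2 = 1/|S|$. The cross term vanishes because $\sum_x p_x = 1 = \sum_x u$. So the identity reduces to
$$Col(X) = \frac{1}{|S|} + \sum_{x\in S} (p_x - u)^2.$$

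Next, I would connect the remaining sum to the statistical distance. By definition $\delta = \tfrac{1}{2}\sum_x |p_x - u|$, so $\sum_x |p_x - u| = 2\delta$. Applying the Cauchy--Schwarz inequality (\ref{eqcs}) with $\alpha_x = p_x - u$ gives
$$\sum_{x\in S} (p_x - u)^2 \geq \frac{\bigl(\sum_x |p_x - u|\bigr)^2}{|S|} = \frac{4\delta^2}{|S|}.$$
Substituting this into the identity above yields
$$Col(X) \geq \frac{1}{|S|} + \frac{4\delta^2}{|S|} = \frac{1 + 4\delta^2}{|S|},$$
which is exactly the desired bound.

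There is no real obstacle here: the proof is a routine bias-variance decomposition followed by a single application of the already-established Cauchy--Schwarz inequality. The only points requiring a bit of care are verifying that the cross term vanishes (which is why the expansion around $u$, rather than around $0$, is the right move) and recognizing that the $\ell^1$-to-$\ell^2$ inequality provided by (\ref{eqcs}) translates the definition of $\Delta(X,U_S)$ into precisely the quadratic quantity appearing in the expansion.
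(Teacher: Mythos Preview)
Your proof is correct and follows essentially the same approach as the paper: both arguments rest on the identity $Col(X)-1/|S|=\sum_x(p_x-1/|S|)^2$ together with a single application of the Cauchy--Schwarz inequality~(\ref{eqcs}) to the deviations $p_x-1/|S|$. The only cosmetic difference is that the paper first normalizes these deviations by $2\delta$ (forcing a separate treatment of the case $\delta=0$), whereas your direct application of~(\ref{eqcs}) to $\alpha_x=p_x-1/|S|$ handles all $\delta$ uniformly.
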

\emph{Proof. }
If $\delta = 0$, then the result is an easy consequence of Equation \ref{eqcol}. Let suppose that $\delta\neq 0$ and  define
$$q_x = |\mathrm{Pr}[X=x] -1/|S||/2\delta. $$
Then $\sum_{x}q_x = 1 $ and by Equation \ref{eqcs}, we have
\begin{align*}
\frac{1}{|S|} & \leq \sum_{x\in S} q_x^2 = \sum_{x\in S} \frac{(\mathrm{Pr}[X=x] - 1/|S|)^2}{4\delta^2} = \frac{1}{4\delta^2}\left(\sum_{x\in S}\mathrm{Pr}[X=x]^2 - 1/|S| \right) \\
& \leq \frac{1}{4\delta^2}(Col(X)-1/|S|).
\end{align*}
The lemma can be deduced easily.

\begin{definition}[Min-entropy]
The min-entropy of a distribution $X$ on a set $S$ denoted by $H_{\infty}(x)$ is defined by :
$$H_{\infty}(x) = \min_{x\in S} \log_2 \frac{1}{\mathrm{Pr}[X = x]}$$
\end{definition}
In other words, a distribution has a min-entropy at least $k$ if the probability of each element is bounded by $2^{-k}$. Intuitively, such a distribution contains $k$ random bits.

\begin{definition}[Extractor]
Let $S$ and $T$ be two finite sets. A $(k, \epsilon)$-extractor is a function $$Ext : S \longrightarrow T $$ such that  for every distribution $X$ on $S$ with $H_{\infty}(x) \geq k$ the distribution $Ext(X)$ is $\epsilon$-close to the uniform distribution on $\{0,1\}^m$
\end{definition}

\begin{definition}[Two-sources-extractor]
Let $R$, $S$ and $T$ be finite sets. The function $Ext : R\times S \longrightarrow T$ is a two-sources-extractor if the distribution $Ext(X_1, X_2)$   is $\delta$-close to the uniform distribution $U_T$ for every uniformly distributed random variables $X_1$ in $R$ and $X_2$  in $S$
\end{definition}
For more information on extractors, see \cite{Nisan1,Nisan2,rd,Shaltiel,ran}.

\subsection{Character sums in finite fields}
In the following, we denote by $e_p$ the character on $\mathbb{F}_p$ such that, for all $x\in \mathbb{F}_p$
\begin{center}
$e_p(x)=e^{\frac{2i\pi x}{p}}$ $\in \mathbb{C}^{*}$.
\end{center}
If $I$ is an interval of integers, it's well known that
$$\sum_{x\in\mathbb{F}_p}\left|\sum_{\theta \in I}e_p(\theta x)\right| \leq p\log_2(p)$$
Denote  by $\Psi=$Hom$(\mathbb{F}_{p^n}, \mathbb{C}^{*})$, the group of additive characters on $\mathbb{F}_{p^n}$ that can be described by the set
\begin{center}
$\Psi=\{\psi, \psi(z)=e_p(\mathrm{Tr}(\alpha z)), \mbox{for } \alpha \in
\mathbb{F}_{p^n}\}$
\end{center}
where $\mathrm{Tr}(x)$ is the trace of $x\in \mathbb{F}_{p^n}$ to
$\mathbb{F}_{p}$ (see \cite{oes}).

\begin{lemma}\label{sums}
Let $V$ be an additive subgroup of $\mathbb{F}_{p^{n}}$. Then,
\begin{equation*}
\sum_{\psi\in \Psi}\left|\sum_{z\in V}\psi(z) \right| \leq p^{n}.
\end{equation*}
\end{lemma}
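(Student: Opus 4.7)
The plan is to evaluate the inner character sum exactly using the standard orthogonality relations for characters of a finite abelian group, and then count how many characters $\psi \in \Psi$ are non-trivial on $V$.

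First I would observe that each $\psi_\alpha \in \Psi$, when restricted to the subgroup $V$, is still an additive character of $V$. By the standard orthogonality relation for characters of a finite abelian group, the sum $\sum_{z \in V} \psi_\alpha(z)$ equals $|V|$ if $\psi_\alpha|_V$ is the trivial character of $V$, and equals $0$ otherwise. Consequently only the ``bad'' characters contribute to the outer sum, and each contributes exactly $|V|$.

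Second, I would identify and count the bad characters. The set
\[
V^{\perp} = \{\alpha \in \mathbb{F}_{p^n} : \mathrm{Tr}(\alpha z) = 0 \text{ for all } z \in V\}
\]
is an $\mathbb{F}_p$-subspace of $\mathbb{F}_{p^n}$, and since the trace form $(x,y) \mapsto \mathrm{Tr}(xy)$ is a non-degenerate $\mathbb{F}_p$-bilinear pairing on $\mathbb{F}_{p^n}$, we have the dimension count $\dim_{\mathbb{F}_p} V^{\perp} = n - \dim_{\mathbb{F}_p} V$, hence $|V^{\perp}| = p^n/|V|$. The characters $\psi_\alpha$ trivial on $V$ are precisely those with $\alpha \in V^{\perp}$.

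Finally, assembling the pieces gives
\[
\sum_{\psi \in \Psi} \left| \sum_{z \in V} \psi(z) \right| = \sum_{\alpha \in V^{\perp}} |V| = |V^{\perp}|\cdot |V| = p^n,
\]
which even yields equality, a fortiori the claimed bound. I do not foresee a genuine obstacle here; the only subtle point is invoking non-degeneracy of the trace form to obtain the correct cardinality of $V^{\perp}$, a well known property of separable field extensions that is already used implicitly in the parametrization of $\Psi$ above.
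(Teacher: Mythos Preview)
Your argument is correct and in fact establishes equality, not merely the stated inequality. The paper does not give its own proof of this lemma; it simply refers the reader to Winterhof~\cite{incomplete}. Your route via orthogonality of characters on $V$ together with the non-degeneracy of the trace pairing (so that $|V^{\perp}|\cdot|V|=p^{n}$) is exactly the standard direct argument for this fact. One small remark: you silently use that an additive subgroup of $\mathbb{F}_{p^{n}}$ is automatically an $\mathbb{F}_p$-subspace, which is true since the additive group is elementary abelian of exponent $p$; it would not hurt to say this explicitly when you invoke $\dim_{\mathbb{F}_p}V$.
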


\begin{proof}
 See \cite{incomplete} for the proof.
\end{proof}

\subsection{Character sums with elliptic curves}
Let $q$ be a prime power and let $E$ be an elliptic curve defined over a finite field $\mathbb{F}_q$ of $q$ elements of characteristic $p\geq 5$ given by an affine Weierstrass equation $$E : y^2 = x^3 + ax + b$$ with $a, b\in \mathbb{F}_q$, see \cite{Avanzi,Blake,Koblitz,Koblitz2,Silverman}. The set of all points on $E$ forms an abelian group with neutral element $\mathcal{O}$. Let $oplus$ denote the group law operation. For a point $P\neq \mathcal{O}$ on $E$ we write $P = (\mathrm{x}(P), \mathrm{y}(P))$. Let $\psi$ be a non principal additive character of $\mathbb{F}_q$ and let $\mathcal{P}$ and $\mathcal{Q}$ be two subsets of  $E(\mathbb{F}_q)$. For arbitrary complex functions $\rho(P)$ and $\vartheta(Q)$ supported on $\mathcal{P}$ and $\mathcal{Q}$ we consider the bilinear sums of additive type:
$$V_{\rho, \vartheta}(\psi, \mathcal{P}, \mathcal{Q}) = \sum_{P\in \mathcal{P}}\sum_{Q\in \mathcal{Q}} \rho(P)\vartheta(Q)\psi(\mathrm{x}(P\oplus Q)).$$
We recall the following interesting result of \cite{Ahmadi}.

\begin{lemma}\label{base}
Let $E$ be an elliptic curve defined over $\mathbb{F}_q$ and let $$\sum_{P\in \mathcal{P}}|\rho(P)|^2 \leq T \quad \text{ and } \quad \sum_{Q\in \mathcal{Q}}|\vartheta(Q)|^2\leq T.$$ Then, uniformly over all nontrivial additive character $\psi$ of $\mathbb{F}_q$,
$$|V_{\rho, \vartheta}(\psi, \mathcal{P}, \mathcal{Q})|\ll \sqrt{qRT}$$
\end{lemma}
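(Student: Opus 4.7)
The plan is to prove the bound by a Cauchy-Schwarz reduction followed by an appeal to Weil-type estimates for character sums along the rational function $Q\mapsto \mathrm{x}(P_1\oplus Q)-\mathrm{x}(P_2\oplus Q)$ on $E$. I read the hypothesis as $\sum_{P}|\rho(P)|^2\leq R$ and $\sum_{Q}|\vartheta(Q)|^2\leq T$, so that the target conclusion $\sqrt{qRT}$ is dimensionally consistent with the argument.

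First, I group the inner $P$-sum as $W(Q):=\sum_{P\in\mathcal{P}}\rho(P)\psi(\mathrm{x}(P\oplus Q))$ and apply Cauchy-Schwarz to the outer $Q$-sum, using the hypothesis on $\vartheta$:
\[
|V_{\rho,\vartheta}(\psi,\mathcal{P},\mathcal{Q})|^2 \;\leq\; T \sum_{Q\in\mathcal{Q}} |W(Q)|^2 \;\leq\; T \sum_{Q\in E(\mathbb{F}_q)} |W(Q)|^2,
\]
the second inequality being a free extension of the summation range by positivity.

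Second, I would expand the modulus squared and interchange the order of summation, obtaining
\[
\sum_{Q\in E(\mathbb{F}_q)} |W(Q)|^2 \;=\; \sum_{P_1,P_2\in\mathcal{P}} \rho(P_1)\overline{\rho(P_2)} \, S(P_1,P_2),
\]
with $S(P_1,P_2):=\sum_{Q\in E(\mathbb{F}_q)} \psi\bigl(\mathrm{x}(P_1\oplus Q)-\mathrm{x}(P_2\oplus Q)\bigr)$. The diagonal $P_1=P_2$ contributes exactly $|E(\mathbb{F}_q)|\sum_{P}|\rho(P)|^2\ll qR$. For $P_1\ne P_2$, the rational function $Q\mapsto \mathrm{x}(P_1\oplus Q)-\mathrm{x}(P_2\oplus Q)$ is non-constant on $E$ with bounded divisor degree, so Bombieri's generalization of the Weil bound yields $|S(P_1,P_2)|\ll\sqrt{q}$; combined with Cauchy-Schwarz on $\mathcal{P}$, the off-diagonal contribution is absorbed into the main term $qR$. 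Multiplying through by the earlier factor $T$ and taking square roots delivers $|V_{\rho,\vartheta}|\ll\sqrt{qRT}$.

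The main obstacle lies in the single-variable character sum estimate for $S(P_1,P_2)$: one must verify that $\phi(Q)=\mathrm{x}(P_1\oplus Q)-\mathrm{x}(P_2\oplus Q)$ defines a non-constant rational function on $E$ whenever $P_1\ne P_2$, control its divisor degree by an absolute constant independent of $P_1,P_2$, and check the behavior at the exceptional points $Q\in\{\mathcal{O},-P_1,-P_2\}$ so that Bombieri's theorem applies uniformly. A secondary technical difficulty is arranging the off-diagonal aggregation so that no spurious factor of $|\mathcal{P}|$ survives after the final Cauchy-Schwarz; this is precisely the point where the refinement of Ahmadi and Shparlinski is essential, and I would follow their device of completing against characters of $E(\mathbb{F}_q)$ in order to recover the clean bound stated in the lemma.
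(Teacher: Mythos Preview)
The paper does not prove this lemma at all: its entire proof is the single line ``See \cite{Ahmadi}.'' So there is no in-paper argument to compare against, and your sketch already goes well beyond what the paper supplies. Your reading of the hypothesis as $\sum_P|\rho(P)|^2\le R$, $\sum_Q|\vartheta(Q)|^2\le T$ is also the only sensible one; the printed statement has a typo.

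That said, the body of your argument contains a genuine gap precisely where you flag it. After Cauchy--Schwarz and extending the $Q$-sum to $E(\mathbb{F}_q)$, the off-diagonal piece is
\[
\sum_{P_1\ne P_2}\rho(P_1)\overline{\rho(P_2)}\,S(P_1,P_2),\qquad |S(P_1,P_2)|\ll\sqrt{q},
\]
and the only estimate available from the triangle inequality plus Cauchy--Schwarz on $\mathcal{P}$ is $\sqrt{q}\bigl(\sum_P|\rho(P)|\bigr)^2\le \sqrt{q}\,|\mathcal{P}|\,R$, which can be as large as $q^{3/2}R$. This is \emph{not} absorbed by the diagonal $qR$; the direct diagonal/off-diagonal split therefore yields only $|V|\ll q^{3/4}\sqrt{RT}$, strictly weaker than the lemma. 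Your acknowledgement that ``no spurious factor of $|\mathcal{P}|$'' must survive is exactly right, but simply invoking Ahmadi--Shparlinski at that point is no more than what the paper already does.

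The device that actually removes the loss is spectral rather than termwise. Writing $M(P,Q)=\psi(\mathrm{x}(P\oplus Q))$ as a $|E|\times|E|$ matrix, one has $|V_{\rho,\vartheta}|\le\|M\|_{\mathrm{op}}\sqrt{RT}$, and because $M(P,Q)$ depends only on $P\oplus Q$, the singular values of $M$ are exactly $\bigl|\sum_{P\in E}\psi(\mathrm{x}(P))\chi(P)\bigr|$ as $\chi$ ranges over the characters of $E(\mathbb{F}_q)$. Each such mixed sum is $\ll\sqrt{q}$ by the Bombieri/Kohel--Shparlinski bound, so $\|M\|_{\mathrm{op}}\ll\sqrt{q}$ and the lemma follows. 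Your phrase ``completing against characters of $E(\mathbb{F}_q)$'' points in this direction, but the sketch as written does not carry it out; replacing the diagonal/off-diagonal paragraph with this operator-norm computation would close the gap.
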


\begin{proof}
 See \cite{Ahmadi}
\end{proof}

\subsubsection{Previous works}

For $q = p$ a prime number $>5$ let's recall the extractor of Chevalier \emph{et al.} in \cite{op}
\begin{definition}
Let $E$  be an elliptic curve defined over a finite field $\mathbb{F}_p$, for a prime $p>2$. Let $G$ be a subgroup of $E(\mathbb{F}_p)$ and let $k$ be a positive integer. Define the function
\begin{align*}
\mathcal{L}_k : G& \longrightarrow \{0, 1\}^k\\
				P& \longmapsto \mathrm{lsb}_k(\mathrm{x}(P))
\end{align*}
\end{definition}
The following lemmas state that $\mathcal{L}_k$ is a deterministic randomness extractor for the elliptic curve $E$

\begin{lemma}
Let $p$ be a $n$-bit prime, $G$ a subgroup of $E(\mathbb{F}_p)$ of cardinal $q$ generated by a point $P_0$, $q$ being an $l$-bit prime, $U_G$ a random variable uniformly distributed in $G$ and $k$ a positive integer. Then
$$\Delta(\mathcal{L}_k(U_G), U_k)\leq 2^{(k+n+\log_2(n))/2 + 3 - l},$$
where $U_k$ is the uniform distribution in $\{0, 1\}^k$.
\end{lemma}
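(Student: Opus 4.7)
My plan is to route through the collision probability, using Lemma~\ref{colision}. Since the codomain has size $2^k$, an upper bound of the form $\mathrm{Col}(\mathcal{L}_k(U_G))\le 2^{-k}+\eta$ will immediately yield $\Delta(\mathcal{L}_k(U_G),U_k)\le \tfrac{1}{2}\sqrt{2^k\eta}$, so the whole task reduces to controlling the excess collision probability $\eta$.

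To estimate $\mathrm{Col}(\mathcal{L}_k(U_G))$, I would write it as a double sum over $P,P'\in G$ of the indicator that $\mathrm{x}(P)$ and $\mathrm{x}(P')$ agree modulo $2^k$. Representing elements of $\mathbb{F}_p$ as integers in $[0,p)$ and setting $I=\{0,1,\ldots,\lceil p/2^k\rceil-1\}$, this indicator is detected by the Fourier identity
\begin{equation*}
\mathbf{1}\bigl[\mathrm{x}(P)\equiv \mathrm{x}(P')\!\pmod{2^k}\bigr]=\frac{1}{p}\sum_{\alpha\in\mathbb{F}_p}\Bigl(\sum_{\theta\in I}e_p(\alpha\,2^k\theta)\Bigr)e_p\bigl(-\alpha(\mathrm{x}(P)-\mathrm{x}(P'))\bigr).
\end{equation*}
The $\alpha=0$ contribution produces the expected main term $|I|/p\approx 2^{-k}$, and the nonzero-$\alpha$ contributions form the error $\eta$.

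For $\alpha\ne 0$, I would separate the $P$ and $P'$ variables and invoke the classical single-source elliptic-curve character-sum bound $\bigl|\sum_{P\in G}e_p(\alpha\,\mathrm{x}(P))\bigr|\ll \sqrt{p}$, valid uniformly in $\alpha$. Squaring this estimate and summing against the outer factor $\sum_{\alpha\ne 0}\bigl|\sum_{\theta\in I}e_p(\alpha\,2^k\theta)\bigr|\le p\log_2(p)$, which is exactly the completion bound recalled in Section~2.2, yields $\eta\ll p\log_2(p)/|G|^2$. Plugging this into Lemma~\ref{colision} with $|G|\ge 2^{l-1}$ and $p\le 2^n$ delivers the claimed exponent $(k+n+\log_2 n)/2+3-l$, with the additive constant $+3$ absorbing the universal factors coming from the Weil bound, the leading $\tfrac{1}{2}$, and the rounding in $|G|$ and $p$.

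The main obstacle here is not conceptual but combinatorial: one has to keep honest track of the implicit constants so that they indeed fit into the advertised $+3$, and to verify that the preimage of every output word $w\in\{0,1\}^k$ is a progression in $[0,p)$ of length $\lceil p/2^k\rceil$ or $\lfloor p/2^k\rfloor$, so that the completion bound applies cleanly. An alternative would be to specialize the bilinear estimate of Lemma~\ref{base} with a trivial second source, but that approach costs a factor of $\sqrt{|G|}$ and would not reach the sharp exponent stated here; the single-source Weil-type bound is really what the lemma needs.
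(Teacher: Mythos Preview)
Your approach is correct and is precisely the standard argument. Note, however, that the paper does not actually prove this lemma: its ``proof'' is just the reference ``See \cite{op}''. The method you outline---bound the collision probability via the Fourier expansion over $\mathbb{F}_p$, separate the main $\alpha=0$ term, control the tail using the Kohel--Shparlinski single-source estimate $\bigl|\sum_{P\in G}e_p(\alpha\,\mathrm{x}(P))\bigr|\le 2\sqrt{p}$ together with the completion bound $\sum_{\alpha}\bigl|\sum_{\theta\in I}e_p(\alpha\theta)\bigr|\le p\log_2 p$, and then convert via Lemma~\ref{colision}---is exactly the argument of \cite{op}, and it is also the template this very paper follows in its proof of Theorem~\ref{RECFP} for the two-source extractor $Ext_1$ (with the bilinear bound of Lemma~\ref{base} replacing the single-source one). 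Your remark that specializing Lemma~\ref{base} to a trivial second source would lose a factor $\sqrt{|G|}$ and miss the stated exponent is also on point; the Kohel--Shparlinski bound is indeed what is needed here. In short: nothing to fix---you have reconstructed the intended proof, and your constant tracking even slightly undershoots the $+3$ in the statement.
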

~~\\
\emph{Proof. } See \cite{op}.
\begin{corollary}
Let $e$ be a positive integer and suppose that
$$k \leq 2l -(n+2e+\log_2(n)+6).$$ Then $\mathcal{L}_k$ is a $(U_G, 2^{-e})$-deterministic extractor
\end{corollary}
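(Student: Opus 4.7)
The plan is to simply specialize the quantitative bound from the preceding lemma and verify that the assumed upper bound on $k$ is exactly what is needed to force the statistical distance below $2^{-e}$. The lemma gives
\[
\Delta(\mathcal{L}_k(U_G), U_k) \le 2^{(k + n + \log_2 n)/2 + 3 - l},
\]
so the only work is to show that the hypothesis on $k$ guarantees that the exponent on the right-hand side is at most $-e$.

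First I would take logarithms (base 2) of the desired inequality $\Delta \le 2^{-e}$ and rearrange: the bound becomes $2^{-e}$ as soon as
\[
\tfrac{1}{2}(k + n + \log_2 n) + 3 - l \le -e.
\]
Multiplying by $2$ and moving the constants to the other side, this is equivalent to
\[
k + n + \log_2 n \le 2l - 6 - 2e,
\]
i.e.
\[
k \le 2l - \bigl(n + 2e + \log_2 n + 6\bigr),
\]
which is precisely the hypothesis of the corollary. Substituting this back into the lemma bound yields $\Delta(\mathcal{L}_k(U_G), U_k) \le 2^{-e}$, so $\mathcal{L}_k$ is a $(U_G, 2^{-e})$-deterministic extractor.

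There is no real obstacle here: the corollary is essentially a book-keeping restatement of the previous lemma in the form typically required for cryptographic applications (fixing the target security parameter $e$ and solving for the admissible output length $k$). The only thing worth being careful about is the arithmetic with the $\log_2 n$ term and the constant $6$, to make sure the exponent comparison is tight and that the inequality goes the correct direction.
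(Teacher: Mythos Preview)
Your argument is correct and is exactly the intended one: the paper states this corollary without proof, treating it as an immediate consequence of the preceding lemma, and your computation is precisely the arithmetic check that justifies it.
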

~~\\~~\\Consider now the finite field $\mathbb{F}_{p^n}$, where $p > 5$ is prime and $n$ is a positive integer. Then $\mathbb{F}_{p^n}$ is a $n$-dimensional vector space over $\mathbb{F}_p$. Let $\{\alpha_1, \alpha_2, \ldots, \alpha_n\}$ be a basis of $\mathbb{F}_{p^n}$ over $\mathbb{F}_p$. That means, every element $x$ of $\mathbb{F}_{p^n}$ can be represented in the form $x=x_1\alpha_1 + x_2\alpha_2 +\ldots +x_n\alpha_n$, where $x_i\in\mathbb{F}_{p^n}$.
Let $E$ be the elliptic curve over $\mathbb{F}_{p^n}$ defined by the Weierstrass equation
$$y^2+(a_1 x+a_3)y=x^3+a_2 x^2+a_4 x+ a_6.$$

The extractor $\mathcal{D}_k$, where $k$ is a positive integer less than $n$,  for a given point $P$ on $E(\mathbb{F}_{p^n})$, outputs the $k$ first $\mathbb{F}_p$-coordinates of the abscissa of the point $P$.
\begin{definition}
Let $G$ be a subgroup of $E(\mathbb{F}_{p^n})$ and $k$ a positive integer less than $n$. Define the function $\mathcal{D}_k$
\begin{align*}
\mathcal{D}_k : G & \longrightarrow \mathbb{F}_{p^k}\\P=(x,y) &
\longmapsto (x_1, x_2, \ldots, x_k)
\end{align*}
where $x\in \mathbb{F}_{p^n}$ is represented as $x=x_1\alpha_1 + x_2\alpha_2 +\ldots +x_n\alpha_n$, and $x_i\in\mathbb{F}_{p^n}$.
\end{definition}

\begin{lemma}
Let $E$ be an elliptic curve defined over $\mathbb{F}_{q}$, whit $q = p^n$  and let $G$ be a subgroup of $E(\mathbb{F}_{q})$. Let $\mathcal{D}_k$ be the function defined above. Then,
$$\mathrm{Col}(\mathcal{D}_k(U_G)\leq \frac{1}{p^k} + \frac{4\sqrt{q}}{|G|^2}$$
and
$$\Delta(\mathcal{D}_k(U_G), U_{\mathbb{F}_{p^k}})\leq \displaystyle \frac{2\sqrt{p^{n+k}}}{|G|}$$
where $U_G$ is uniformly distributed in $G$ and $U_{\mathbb{F}_{p^k}}$ is the uniform distribution in
$\mathbb{F}_{p^k}$.
\end{lemma}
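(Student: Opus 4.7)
The plan is to imitate the single-subgroup argument of \cite{op,ciss1,ciss2}: expand the indicator of a fibre of $\mathcal{D}_k$ in the additive characters of $\mathbb{F}_{p^n}$, so that the counting problem reduces to estimating an exponential sum on the subgroup $G\subset E(\mathbb{F}_q)$. Concretely, decompose $\mathbb{F}_{p^n} = W \oplus V$ with $W = \mathrm{span}_{\mathbb{F}_p}(\alpha_1,\ldots,\alpha_k)$ and $V = \mathrm{span}_{\mathbb{F}_p}(\alpha_{k+1},\ldots,\alpha_n)$. For $y=(y_1,\ldots,y_k)\in\mathbb{F}_{p^k}$ write $y_0=\sum_{i=1}^k y_i\alpha_i \in W$; then $\mathcal{D}_k(P)=y$ iff $\mathrm{x}(P)\in y_0+V$. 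Let $V^\perp\subset\Psi$ denote the annihilator of $V$ under the trace pairing (using the parametrisation of $\Psi$ given after Lemma \ref{sums}); it has order $p^k$. Standard Fourier inversion gives $\mathbf{1}_{y_0+V}(x)=p^{-k}\sum_{\psi\in V^\perp}\psi(x-y_0)$, whence
\begin{equation*}
N_y := \#\{P\in G : \mathcal{D}_k(P)=y\} = \frac{|G|}{p^k} + \frac{1}{p^k}\sum_{\psi\in V^\perp\setminus\{\mathbf{1}\}}\overline{\psi(y_0)}\,S_\psi,
\end{equation*}
where the elliptic-curve character sum $S_\psi := \sum_{P\in G}\psi(\mathrm{x}(P))$ is isolated as the only non-trivial input.

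For the collision bound I would square $N_y$, sum over $y\in\mathbb{F}_{p^k}$, and exploit orthogonality. The cross-term $\sum_{y_0\in W}(\psi_2\overline{\psi_1})(y_0)$ vanishes unless $\psi_2\overline{\psi_1}$ is trivial on $W$; combined with $\psi_2\overline{\psi_1}\in V^\perp$ and the decomposition $V\oplus W=\mathbb{F}_{p^n}$, this forces $\psi_1=\psi_2$ and collapses the double sum to
\begin{equation*}
\mathrm{Col}(\mathcal{D}_k(U_G)) = \frac{1}{p^k} + \frac{1}{p^k|G|^2}\sum_{\psi\in V^\perp\setminus\{\mathbf{1}\}}|S_\psi|^2.
\end{equation*}
The Col estimate then follows from the Weil-type bound $|S_\psi|\ll\sqrt{q}$, applied to each of the $p^k-1$ non-trivial characters in $V^\perp$. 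The statistical-distance bound drops out either by feeding the collision estimate into Lemma \ref{colision} (which yields $\delta^2\le p^{n+k}/|G|^2$ up to the stated constant), or directly from $\sum_y|N_y-|G|/p^k|$ via a Cauchy--Schwarz step that uses the same orthogonality collapse.

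The only deep ingredient, and the step I expect to be the main obstacle in a self-contained write-up, is the Weil-type bound on $S_\psi$ for a non-trivial additive character $\psi$ summed over the subgroup $G$ of $E(\mathbb{F}_q)$; everything else is routine finite-group Fourier analysis on the $\mathbb{F}_p$-vector space $\mathbb{F}_{p^n}$.
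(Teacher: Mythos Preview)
Your approach is correct and matches the method to which the paper defers (the proof here is simply ``See \cite{ciss2}''), and it is the same template the paper itself uses for the analogous two-source results in Theorems~\ref{RECFP} and~2. The only cosmetic difference is that you pass immediately to the annihilator $V^\perp\subset\Psi$ (of order $p^k$) via Fourier inversion, whereas the paper's house style keeps the full character group together with an inner sum $\sum_{\lambda\in V}\psi(-\lambda)$ and then appeals to Lemma~\ref{sums} to bound $\sum_{\psi\in\Psi}\bigl|\sum_{\lambda\in V}\psi(\lambda)\bigr|\le p^n$. Since $\sum_{\lambda\in V}\psi(\lambda)=|V|\cdot\mathbf{1}_{\psi\in V^\perp}$, the two formulations are equivalent; yours makes the orthogonality collapse explicit and bypasses Lemma~\ref{sums}, while the paper's version reuses its standing lemma verbatim.

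One remark worth recording: your computation produces a collision excess of order $q/|G|^2$ (from $|S_\psi|\le 2\sqrt{q}$ squared, summed over $p^k-1$ characters, divided by $p^k$), not the $4\sqrt{q}/|G|^2$ printed in the lemma. This is almost certainly a transcription typo in the present paper, since the stated distance bound $2\sqrt{p^{n+k}}/|G|$ is precisely what Lemma~\ref{colision} yields from a collision excess of $4q/|G|^2$, and is inconsistent with an excess of $4\sqrt{q}/|G|^2$. Your argument is not at fault here.
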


\begin{proof}
See \cite{ciss2}
\end{proof}

\begin{lemma}
Let $p>2$ be a prime and  $E(\mathbb{F}_{p^n})$ be an elliptic curve over $\mathbb{F}_{p^n}$ and  $G \subset E(\mathbb{F}_{p^n})$ be a multiplicative subgroup of order $r$ with $|r|=t$ and $|p|=m$ and let $U_G$ be the uniform distribution in $G$. If  $e>1$ is an integer  and $k>1$ is an integer such that
\begin{equation*}
k\leq \frac{2t - 2e - nm - 4}{m},
\end{equation*}
then $\mathcal{D}_k$ is a $(\mathbb{F}_{p}^k, 2^{-e})$-deterministic randomness extractor over the elliptic curve $E(\mathbb{F}_{p^n})$.
\end{lemma}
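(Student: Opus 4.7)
The plan is to reduce everything to the statistical-distance bound already established in the preceding lemma, namely
$$\Delta(\mathcal{D}_k(U_G), U_{\mathbb{F}_{p^k}}) \leq \frac{2\sqrt{p^{n+k}}}{|G|},$$
and then to show that under the stated hypothesis on $k$ the right-hand side is at most $2^{-e}$, which is exactly the definition of a $(\mathbb{F}_p^k, 2^{-e})$-extractor (taking $X_1=U_G$, with min-entropy $\log_2 |G|$).

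First I would translate the bit-length hypotheses into numerical inequalities. Since $|p|=m$ we have $p \leq 2^{m}$, so $p^{n+k} \leq 2^{m(n+k)}$, which gives $\sqrt{p^{n+k}} \leq 2^{m(n+k)/2}$. Since $|r|=t$ and $|G|=r$, we likewise have $|G| \geq 2^{t-1}$. Substituting these two estimates into the statistical-distance bound yields
$$\Delta(\mathcal{D}_k(U_G), U_{\mathbb{F}_{p^k}}) \leq \frac{2 \cdot 2^{m(n+k)/2}}{2^{t-1}} = 2^{\,m(n+k)/2 + 2 - t}.$$

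Next I would impose the target inequality $2^{\,m(n+k)/2 + 2 - t} \leq 2^{-e}$, i.e. $m(n+k)/2 + 2 - t \leq -e$. Solving for $k$ gives $mk \leq 2t - 2e - mn - 4$, which rearranges to exactly
$$k \leq \frac{2t - 2e - nm - 4}{m},$$
the hypothesis of the lemma. Thus under this hypothesis the statistical distance is at most $2^{-e}$, which is the conclusion. Finally I would remark that $U_G$, being uniform on $G$, has min-entropy $\log_2 |G| \geq t-1$, so the setting really falls under the extractor definition given earlier.

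There is no serious obstacle here: the real analytic work has already been done in the previous lemma via bilinear character-sum estimates over $E(\mathbb{F}_{p^n})$. The only mildly delicate point is being careful with the conversion between the quantities $p$, $r$ and their bit-lengths $m$, $t$ (choosing the correct direction of the inequalities $p \leq 2^m$ and $r \geq 2^{t-1}$), since a sloppy choice would cost an additive constant inside the exponent and change the $-4$ in the numerator. Apart from that, the statement is essentially a book-keeping corollary of the preceding lemma.
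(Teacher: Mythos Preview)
Your proposal is correct and follows exactly the natural route: take the statistical-distance bound $\Delta(\mathcal{D}_k(U_G), U_{\mathbb{F}_{p^k}}) \leq 2\sqrt{p^{n+k}}/|G|$ from the preceding lemma, convert $p$ and $r$ to their bit-lengths via $p\le 2^m$ and $r\ge 2^{t-1}$, and solve the resulting exponent inequality for $k$. The paper itself does not give an in-text proof here (it refers to \cite{ciss2}), but this corollary-style derivation is precisely what is intended, and your bookkeeping with the constants reproduces the stated bound $k\le (2t-2e-nm-4)/m$ exactly.
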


\begin{proof}
See \cite{ciss2}
\end{proof}


\section{Randomness extractors for  $E(\mathbb{F}_p)$}
\begin{definition}
Let $E$ be an elliptic curve defined a finite field $\mathbb{F}_q$, with $q=p$ a prime greater than 5, and let $\mathcal{P}$ and $\mathcal{Q}$ be  two subgroups of $E(\mathbb{F}_q)$ with $\#\mathcal{P} = r$ and $\#\mathcal{Q} = t$. Define the function
\begin{align*}
Ext_1 : \mathcal{P} \times \mathcal{Q}  & \longrightarrow \{0, 1\}^k\\
					(P, Q)  & \longmapsto \mathrm{lsb}_k(\mathrm{x}(P\oplus Q))
\end{align*}
\end{definition}

\begin{theorem}\label{RECFP}
Let $E$ be an elliptic curve defined over $\mathbb{F}_p$ and let $\mathcal{P}$ and $\mathcal{Q}$ be two subgroups of $E(\mathbb{F}_p)$, with $\#\mathcal{P} = r$ and $\#\mathcal{Q} = t$. Let $U_{\mathcal{P}}$ and $U_{\mathcal{Q}}$ be two random variables uniformly distributed in $\mathcal{P}$ and $\mathcal{Q}$ respectively and let $U_k$ be the uniform distribution in $\{0, 1\}^k$. Then,
$$\Delta(Ext_1(U_{\mathcal{P}}, U_{\mathcal{Q}}), U_k) \ll \sqrt{\frac{2^{k-1}p\log(p)}{rt}}$$
\end{theorem}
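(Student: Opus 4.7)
The plan is to reduce the statistical distance bound to a bound on the collision probability via Lemma~\ref{colision}, and then to control the collision probability through a Fourier expansion combined with the bilinear sum estimate of Lemma~\ref{base}. Applying Lemma~\ref{colision} with $|S|=2^k$ gives $\Delta(Ext_1(U_{\mathcal{P}},U_{\mathcal{Q}}),U_k)\leq\tfrac{1}{2}\sqrt{2^k(\mathrm{Col}(Ext_1(U_{\mathcal{P}},U_{\mathcal{Q}}))-2^{-k})}$, so it suffices to show that $\mathrm{Col}(Ext_1)-2^{-k}\ll p\log(p)/(rt)$.

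Identifying $\mathbb{F}_p$ with $\{0,1,\dots,p-1\}$, partition it into the comb-shaped sets $A_c=\{a\in[0,p-1]:a\equiv c\pmod{2^k}\}$ for $c\in\{0,1\}^k$, so that $\mathrm{Col}(Ext_1)=(rt)^{-2}\sum_c M_c^2$ with $M_c=\#\{(P,Q)\in\mathcal{P}\times\mathcal{Q}:\mathrm{x}(P\oplus Q)\in A_c\}$. Expanding the indicator of $A_c$ through the additive characters of $\mathbb{F}_p$ yields
$$M_c=\frac{1}{p}\sum_{\theta\in\mathbb{F}_p}\widehat{A_c}(-\theta)\,V(\theta),\qquad V(\theta)=\sum_{P\in\mathcal{P}}\sum_{Q\in\mathcal{Q}}e_p(\theta\,\mathrm{x}(P\oplus Q)),$$
where $\widehat{A_c}(\theta)=\sum_{a\in A_c}e_p(\theta a)$. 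The $\theta=0$ contribution $|A_c|(rt)/p$ produces the desired main term $2^{-k}$ in $\mathrm{Col}(Ext_1)$, up to an $O(2^k/p^2)$ correction arising from $\bigl||A_c|/p-2^{-k}\bigr|\leq 1/p$.

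For $\theta\neq 0$, apply Lemma~\ref{base} with the constant weights $\rho\equiv\mathbf{1}_{\mathcal{P}}$ and $\vartheta\equiv\mathbf{1}_{\mathcal{Q}}$, so that $R=r$ and $T=t$; this gives the uniform estimate $|V(\theta)|\ll\sqrt{prt}$ on every nontrivial bilinear sum. Since each $A_c$ is an arithmetic progression of common difference $2^k$, a geometric-series bound gives $|\widehat{A_c}(\theta)|\leq 1/|\sin(\pi\theta 2^k/p)|$ and hence $\sum_c|\widehat{A_c}(\theta)|\leq 2^k/|\sin(\pi\theta 2^k/p)|$. Under the bijection $\theta\mapsto\theta\cdot 2^k\pmod p$ on $\mathbb{F}_p^{\times}$, the classical $\ell^1$ estimate that underlies $\sum_{x\in\mathbb{F}_p}|\sum_{\theta\in I}e_p(\theta x)|\leq p\log_2 p$ (quoted just before Lemma~\ref{sums}) delivers $\sum_{\theta\neq 0}1/|\sin(\pi\theta 2^k/p)|\ll p\log(p)$.

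Plugging these ingredients into the expansion of $(rt)^{-2}\sum_c M_c^2$ and handling the cross-terms through the orthogonality relation $\sum_c\widehat{A_c}(\theta)=0$ for $\theta\neq 0$ (which follows from the fact that the $A_c$'s partition $[0,p-1]$) yields $\mathrm{Col}(Ext_1)-2^{-k}\ll p\log(p)/(rt)$, after which Lemma~\ref{colision} delivers the theorem. The main obstacle will be the bookkeeping of this final step: one must combine the $\sqrt{prt}$ factor produced by Lemma~\ref{base} with the $2^k p\log(p)$ estimate for $\sum_{\theta\neq 0}\sum_c|\widehat{A_c}(\theta)|$ in a way that does not introduce superfluous $2^k$ factors, so that the target rate $p\log(p)/(rt)$ for the excess collision probability, which then converts via Lemma~\ref{colision} to exactly $\sqrt{2^{k-1}p\log(p)/(rt)}$, is genuinely attained.
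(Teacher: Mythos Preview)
Your proposal uses the same ingredients as the paper---reduce to collision probability via Lemma~\ref{colision}, expand with additive characters, bound the nontrivial bilinear sums by Lemma~\ref{base} with constant weights, and control the remaining character sum over the interval by the $p\log p$ estimate---so the approach is essentially identical.

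The one difference is in the execution, and it touches exactly the ``bookkeeping'' worry you flag at the end. You expand each fibre count $M_c$ by Fourier and then form $\sum_c M_c^2$, which forces you to control cross-terms and to argue that no extra $2^k$ factor survives. The paper avoids this entirely by noting at the outset that $\sum_c M_c^2=\#\mathcal{A}$, where
\[
\mathcal{A}=\bigl\{(P,Q),(R,S)\in\mathcal{P}\times\mathcal{Q}\;:\;\exists\,\theta\le\theta_0,\ \mathrm{x}(P\oplus Q)-\mathrm{x}(R\oplus S)=2^k\theta \pmod p\bigr\},
\]
and then applies a \emph{single} character expansion to $\#\mathcal{A}$. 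The inner sum over $\theta$ in an interval produces one incomplete character sum, bounded by $p\log p$, and the double sums over $(P,Q)$ and $(R,S)$ each contribute one factor of $|V|\ll\sqrt{prt}$; this gives $\mathrm{Col}(Ext_1)\le 2^{-k}+O(p\log p/(rt))$ with no squaring step and hence no risk of a spurious $2^k$. Your route can be made to work, but the paper's direct count of $\#\mathcal{A}$ is the cleaner way to close the gap you yourself identify.
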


\begin{proof}
Let $\alpha = 2^k$ and let $\theta_0 = \mathrm{msb}_{n-k}(p-1)$. Define the set
$$\mathcal{A} = \{(P, Q), (R, S) \in \mathcal{P}\times \mathcal{Q}\  | \ \exists \ \theta \leq \theta_0, \mathrm{x}(P\oplus Q) - \mathrm{x}(R\oplus S) - \alpha \theta = 0 \ \mathrm{mod} \ p\}.$$
Consider the double character sum $V_{\rho, \vartheta}(\psi, \mathcal{P}, \mathcal{Q})$, with $\rho(P) = 1\quad \forall\  P$ and $\vartheta(Q) = 1 \quad \forall \ Q$.
Then,
\begin{align*}
\mathrm{Col}(Ext_1(U_{\mathcal{P}}, U_{\mathcal{Q}})) &= \frac{\#\mathcal{A}}{(rt)^2}\\
													  &=\frac{1}{r^2t^2p}\sum_{P\in \mathcal{P}}\sum_{Q\in \mathcal{Q}}\sum_{R\in \mathcal{P}}\sum_{S\in \mathcal{Q}}\sum_{\theta\leq \theta_0}\sum_{\psi \in \Psi}\psi(\mathrm{x}(P\oplus Q) - \mathrm{x}(R\oplus S) - \alpha \theta)\\
													  &=\frac{1}{2^k}+\frac{1}{r^2t^2p}\sum_{P\in \mathcal{P}}\sum_{Q\in \mathcal{Q}}\sum_{R\in \mathcal{P}}\sum_{S\in \mathcal{Q}}\sum_{\theta \leq \theta_0}\sum_{\psi \neq \psi_0}\psi(\mathrm{x}(P\oplus Q) - \mathrm{x}(R\oplus S) - \alpha \theta)\\
													  &\leq \frac{1}{2^k}+\frac{1}{r^2t^2p}\left|\sum_{P\in \mathcal{P}}\sum_{Q\in \mathcal{Q}}\psi(\mathrm{x}(P\oplus Q))\right| \left|\sum_{R\in \mathcal{P}}\sum_{S\in \mathcal{Q}}\psi(-\mathrm{x}(R\oplus S)\right|\left|\sum_{\theta\leq \theta_0}\sum_{\psi \neq \psi_0} \psi( - \alpha \theta))\right|\\
													  &\ll \frac{1}{2^k}+\frac{V^2}{r^2t^2p}\sum_{\theta\leq \theta_0}\left|\sum_{\psi \neq \psi_0} \psi( - \alpha \theta))\right|\\
													  &\ll \frac{1}{2^k} + \frac{p\log(p)}{rt}
\end{align*}
Therefore,
$$\Delta(Ext_1(U_{\mathcal{P}}, U_{\mathcal{Q}}), U_k) \ll \sqrt{\frac{2^{k-1}p\log(p)}{rt}}$$

\end{proof}

\begin{corollary}
Let $m$ and $l$ be the bit size of $r$ and $t$ respectively and let $e$ be a positive integer. If $k$ is a positive integer such that
$$k\leq m + l -(n+2e+\log_2(n) + 1),$$ then $Ext_1$ is a $(k, O(2^{-e}))$-deterministic extractor for $\mathcal{P}\times \mathcal{Q}$.
\end{corollary}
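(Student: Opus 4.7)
The plan is to derive this corollary directly from Theorem \ref{RECFP} by plugging in the bit-size bounds for the three quantities $p$, $r$, $t$ appearing on the right-hand side of the statistical distance inequality, and then matching the resulting exponent with the hypothesis on $k$.

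First I would replace each quantity in
$$\Delta(Ext_1(U_{\mathcal{P}}, U_{\mathcal{Q}}), U_k) \ll \sqrt{\frac{2^{k-1} p \log(p)}{rt}}$$
by its bit-size estimate: $p < 2^{n}$ so $\log(p) \le n$ (interpreting $\log$ as $\log_2$, consistent with the previous work of Chevalier \emph{et al.}), while $r \ge 2^{m-1}$ and $t \ge 2^{l-1}$ since $r$ and $t$ have bit sizes $m$ and $l$. Substituting these bounds into the right-hand side collapses it to a single power of two, namely
$$\Delta(Ext_1(U_{\mathcal{P}}, U_{\mathcal{Q}}), U_k) \ll 2^{(k + n + \log_2(n) - m - l + 1)/2}.$$

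Next I would impose $\Delta \le O(2^{-e})$ and solve the resulting linear inequality in $k$. Requiring the exponent $(k + n + \log_2(n) - m - l + 1)/2$ to be at most $-e$ is equivalent to
$$k \le m + l - (n + 2e + \log_2(n) + 1),$$
which is precisely the hypothesis of the corollary. The implicit constant hidden in the $\ll$ of Theorem \ref{RECFP} is harmlessly absorbed into the $O(\cdot)$ notation in the conclusion.

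There is no real obstacle here: the entire argument is routine bookkeeping of logarithms, and the corollary is essentially a restatement of Theorem \ref{RECFP} in ``bit-size'' language comparable to the analogous corollary following the extractor $\mathcal{L}_k$ of Chevalier \emph{et al}. The only point that requires a minimum of care is to make sure the bounds $r \ge 2^{m-1}$ and $t \ge 2^{l-1}$ are used (rather than $r \le 2^m$), so that the inequality sign is preserved after passing to the reciprocal inside the square root; this accounts for the additive constant $+1$ appearing in the hypothesis.
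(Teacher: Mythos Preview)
Your proposal is correct and is exactly the intended derivation: the paper states this corollary immediately after Theorem \ref{RECFP} without giving any proof, and the routine bit-size substitution you carry out (using $p<2^n$, $\log_2 p\le n$, $r\ge 2^{m-1}$, $t\ge 2^{l-1}$) is precisely what the reader is meant to supply. Your handling of the additive constant $+1$ coming from the lower bounds on $r$ and $t$ is also correct.
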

The following corollary is a generalization of the results of Chevalier \emph{et al.} in \cite{op}.

\begin{corollary}
If $\mathcal{P} = \mathcal{Q}$ and $e$ is a positive integer such that $$k\leq 2l -(n + 2e + \log_2(n) + 1),$$ then $Ext_1$ is a $(k, O(2^{-e}))$-deterministic randomness extractor for $\mathcal{P}$ and generalizes the result of Corollary 15 of \cite{op}.
\end{corollary}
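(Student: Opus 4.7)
The plan is to obtain this corollary as a direct specialization of the preceding corollary, with no new analytic work beyond a substitution, and then to verify the claimed generalization of Corollary 15 of \cite{op}.

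First I would set $\mathcal{Q} = \mathcal{P}$ in the preceding corollary. Then $r = t$ and hence $m = l$, so the general hypothesis $k \leq m + l - (n + 2e + \log_2(n) + 1)$ collapses to $k \leq 2l - (n + 2e + \log_2(n) + 1)$, which is exactly the hypothesis stated here. Under this hypothesis the preceding corollary already produces a $(k, O(2^{-e}))$-deterministic extractor on $\mathcal{P} \times \mathcal{P}$, which is the main claim to be proved. Conceptually this reduction is just the observation that Theorem \ref{RECFP} nowhere used that $\mathcal{P}$ and $\mathcal{Q}$ were distinct; the bilinear-sum estimate of Lemma \ref{base} applies uniformly.

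For the generalization claim, I would observe that when $P$ and $Q$ are drawn independently and uniformly from the same subgroup $\mathcal{P}$, the sum $P \oplus Q$ is again uniformly distributed on $\mathcal{P}$. Consequently $Ext_1(P, Q) = \mathrm{lsb}_k(\mathrm{x}(P \oplus Q))$ has the same output distribution as $\mathcal{L}_k(R)$ for $R$ uniform in $\mathcal{P}$, so the bound just established transfers verbatim to $\mathcal{L}_k$, recovering the Chevalier \emph{et al.} statement up to the constants hidden in the $\ll$ notation. This exhibits $Ext_1$ as a strict generalization: it handles independent inputs from two arbitrary subgroups and degenerates to $\mathcal{L}_k$ when the subgroups coincide.

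The only place where some care is warranted is constant-tracking in the comparison with \cite{op}: our bound carries an additive constant $1$ while theirs carries $6$, a discrepancy traceable to differences in how the $\log_2 p$ factor and the sum over $\theta \leq \theta_0$ are absorbed into the implicit $\ll$ constants. Since both statements are asymptotic in nature, this is not a genuine obstacle; no new estimate is required, and the corollary follows immediately from the reduction above.
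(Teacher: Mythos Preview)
Your proposal is correct and follows essentially the same two-step structure as the paper: first specialize the preceding corollary by setting $m=l$, then argue that the output distribution of $Ext_1$ on $\mathcal{P}\times\mathcal{P}$ coincides with that of $\mathcal{L}_k$ on $\mathcal{P}$.

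The only noteworthy difference is in how the second step is phrased. The paper feeds the \emph{same} random point into both slots, writing $Ext_1(P,P)=\mathcal{L}_k(2P)$ and then observing that $2P$ is again uniform on $\mathcal{P}$; you instead take two \emph{independent} uniform draws $P,Q$ and use that $P\oplus Q$ is uniform on $\mathcal{P}$. Your formulation is arguably more faithful to the two-source extractor definition (which posits independent inputs) and avoids the implicit assumption that doubling is a bijection on $\mathcal{P}$, but both routes yield the same identification of distributions and the same conclusion. Your remark on the discrepancy in additive constants ($1$ versus $6$) is accurate and not something the paper addresses.
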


\begin{proof}
\begin{enumerate}
\item In fact, if $\mathcal{P} = \mathcal{Q}$ then $m = l$ and $$k\leq 2l -(n + 2e + \log_2(n)  + 1)$$ for $e  > 0$. Thus, $Ext_1$ is a $(k, O(2^{-e}))$-deterministic randomness extractor for $\mathcal{P}$.
\item Note that if $\mathcal{P} = \mathcal{Q}$, then $Ext_1(P, P)$ for $P\in_R \mathcal{P}$ is equivalent to $\mathcal{L}_k(2P)$. Since the point $2P$ is also random, we have
    $$\Delta(Ext_1(U_{\mathcal{P}}, U_{\mathcal{P}}), U_k) = \Delta(\mathcal{L}_k(U_{\mathcal{P}}), U_k) = O(2^{-e}) $$
\end{enumerate}
\end{proof}

\section{Randomness Extractor for $E(\mathbb{F}_{p^n})$, with $p > 5$}
\begin{definition}
Let $E$ be an elliptic curve defined over the finite field $\mathbb{F}_{p^n}$, where $p$ is a prime greater than 5 and $n > 1$. Consider two subgroups $\mathcal{P}$ and $\mathcal{Q}$ of $E(\mathbb{F}_q)$. Define the function
\begin{align*}
Ext_2 : \mathcal{P} \times \mathcal{Q}  & \longrightarrow \mathbb{F}_p^k\\
					(P, Q)  & \longmapsto (x_1, x_2, \ldots, x_k)
\end{align*}
where $\mathrm{x}(P\oplus Q) = (x_1, x_2, \ldots, x_k, x_{k+1}, \ldots, x_n)$. In other words, the function $Ext_2$ output the $k$ first $\mathbb{F}_p$-coefficients of the point $P\oplus Q$.
\end{definition}

\begin{theorem}
Let $E$ be an elliptic curve defined over $\mathbb{F}_{p^n}$ and let $\mathcal{P}$ and $\mathcal{Q}$ be two subgroup of $E(\mathbb{F}_{p^n})$ with $\#\mathcal{P} = r$ and $\#\mathcal{Q} = t$. Denote by $U_{\mathcal{P}}$ and $U_{\mathcal{Q}}$ two random variables uniformly distributed on $\mathcal{P}$ and $\mathcal{Q}$ respectively. Then,
$$\Delta (Ext_2(U_{\mathcal{P}}, U_{\mathcal{Q}}), U_{\mathbb{F}_p^k}) \ll \sqrt{\frac{p^{n+k}}{4rt}}$$
\end{theorem}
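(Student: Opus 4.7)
The plan is to adapt the collision-probability argument from the proof of Theorem \ref{RECFP} to $\mathbb{F}_{p^n}$, replacing the ``least significant $k$ bits'' condition by a ``first $k$ $\mathbb{F}_p$-coordinates'' condition. The first step is to observe that the elements of $\mathbb{F}_{p^n}$ whose first $k$ coordinates in the basis $\{\alpha_1,\dots,\alpha_n\}$ vanish form the $\mathbb{F}_p$-subspace $V = \mathrm{Span}(\alpha_{k+1},\dots,\alpha_n)$, of cardinality $p^{n-k}$. Two quadruples $(P,Q), (R,S) \in \mathcal{P}\times\mathcal{Q}$ therefore give equal outputs under $Ext_2$ exactly when $\mathrm{x}(P\oplus Q) - \mathrm{x}(R\oplus S) \in V$, so that
\[
\mathrm{Col}(Ext_2(U_{\mathcal{P}}, U_{\mathcal{Q}})) \;=\; \frac{\#\mathcal{A}}{(rt)^2},
\]
where $\mathcal{A}$ is this collision set of quadruples.

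Next I would detect the subspace condition by writing $\#\mathcal{A} = \sum_{v\in V}\#\{(P,Q,R,S):\mathrm{x}(P\oplus Q) - \mathrm{x}(R\oplus S) = v\}$ and then use the orthogonality relation $\mathbf{1}_{z=0}=p^{-n}\sum_{\psi\in\Psi}\psi(z)$ to detect each individual equation. After interchanging the sums, the inner quadruple sum factors as $|V_{1,1}(\psi,\mathcal{P},\mathcal{Q})|^2$, giving
\[
\mathrm{Col}(Ext_2(U_{\mathcal{P}}, U_{\mathcal{Q}})) \;=\; \frac{1}{(rt)^2 p^n}\sum_{\psi\in\Psi}\bigl|V_{1,1}(\psi,\mathcal{P},\mathcal{Q})\bigr|^2\;\sum_{v\in V}\psi(-v).
\]
The principal character $\psi_0$ contributes the expected main term $(rt)^2 p^{n-k}/((rt)^2 p^n) = 1/p^k$.

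For the non-principal characters, I would invoke Lemma \ref{base} with $\rho\equiv \vartheta\equiv 1$ (so that the weight bounds are $r$ and $t$, and $q=p^n$) to obtain $|V_{1,1}(\psi,\mathcal{P},\mathcal{Q})|^2 \ll p^n\, r\, t$ uniformly for $\psi\neq\psi_0$, and Lemma \ref{sums} applied to the additive subgroup $V$ to obtain $\sum_{\psi\in\Psi}|\sum_{v\in V}\psi(-v)| \leq p^n$. Combining these two bounds collapses the error into
\[
\mathrm{Col}(Ext_2(U_{\mathcal{P}}, U_{\mathcal{Q}})) \;\leq\; \frac{1}{p^k} \;+\; O\!\left(\frac{p^n}{rt}\right).
\]
Substituting into Lemma \ref{colision} with $|S|=p^k$ yields $\Delta^2 \leq (p^k/4)(\mathrm{Col}-1/p^k) \ll p^{n+k}/(4rt)$, and taking square roots delivers the announced statistical-distance bound.

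The main obstacle is the careful orchestration of the two character orthogonalities, one detecting the $x$-coordinate coincidence and one detecting membership in the subspace $V$. I would deliberately keep the full dual group $\Psi$ throughout rather than restricting prematurely to the annihilator of $V$, so that Lemma \ref{sums} applies directly to the inner sum $\sum_{v\in V}\psi(-v)$ and supplies the factor of $p^n$ needed to kill the denominator $(rt)^2 p^n$ after Lemma \ref{base} has contributed $p^n rt$. Once the trivial character has been cleanly separated, the rest of the assembly is routine bookkeeping that parallels the $\mathbb{F}_p$ case of Theorem \ref{RECFP}.
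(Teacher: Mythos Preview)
Your proposal is correct and follows exactly the approach sketched in the paper: define the collision set via the $\mathbb{F}_p$-subspace $V$ (the paper's $\mathcal{M}$), express the collision probability through additive characters, separate the trivial character to get the main term $1/p^k$, and bound the remainder with Lemma~\ref{base} on the bilinear sum and Lemma~\ref{sums} on the subspace sum before converting via Lemma~\ref{colision}. The paper only gives a sketch, and your write-up fills in precisely the details it points to.
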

~~\\
\emph{Sketch of proof.} Consider the sets
$$\mathcal{M} = \{(x_{k+1}\alpha_{k+1}+ x_{k+2}\alpha_{k+2}+\ldots+x_n\alpha_n) , x_i \in \mathbb{F}_{p}\}\subset \mathbb{F}_{p^n}$$ and
$$\mathcal{A} = \{(P, Q), (R, S) \in \mathcal{P}\times \mathcal{Q}\quad|\quad \exists \lambda \in \mathcal{M}, \mathrm{x}(P\oplus Q) - \mathrm{x}(R\oplus S) = \lambda\}.$$
Then,
$$\mathrm{Col}(Ext_2(U_\mathcal{P}, U_{\mathcal{Q}})) = \frac{\#\mathcal{A}}{(rt)^2}.$$
Use the technique of the proof of Theorem \ref{RECFP} and  Lemma \ref{sums} and \ref{base} to complete the proof.

\begin{corollary}
Let $p > 5$ be a prime and $E$ be an elliptic curve defined over $\mathbb{F}_{p^n}$, for $n > 0$. Let $\mathcal{P}$ and $\mathcal{Q}$ be two subgroups of $E(\mathbb{F}_{p^n})$, with $r = \#\mathcal{P}$, $t = \#\mathcal{Q}$. Note by $m = |p|$, $l = |t|$ and $s = |r|$. If $e$ is a positive integer such that
$$k \leq \frac{l + s - 2e -mn}{m},$$ then $Ext_2$ is a $(k, O(2^{-e}))$-deterministic randomness extractor for $\mathcal{P} \times \mathcal{Q}$.
\end{corollary}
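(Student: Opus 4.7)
The plan is to deduce the corollary as a straightforward consequence of the preceding theorem's statistical-distance bound
$$\Delta(Ext_2(U_{\mathcal{P}}, U_{\mathcal{Q}}), U_{\mathbb{F}_p^k}) \ll \sqrt{\frac{p^{n+k}}{4rt}},$$
by translating the hypothesis on $k$ into a numerical upper bound on the right-hand side. Concretely, I would show that the stated inequality
$$k \leq \frac{l + s - 2e - mn}{m}$$
forces $\sqrt{p^{n+k}/(4rt)} \ll 2^{-e}$, which by the definition of a $(k,\epsilon)$-extractor is exactly what we need.

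First I would unfold the bit-size notation. Since $m=|p|$, $s=|r|$, $l=|t|$, we have $p < 2^m$ so $p^{n+k} < 2^{m(n+k)}$, and $r \geq 2^{s-1}$, $t \geq 2^{l-1}$, so $rt \geq 2^{s+l-2}$. Plugging these into the theorem gives
$$\frac{p^{n+k}}{4rt} \;\leq\; \frac{2^{m(n+k)}}{4 \cdot 2^{s+l-2}} \;=\; 2^{\,m(n+k) - s - l}.$$
Hence it suffices to establish $m(n+k) - s - l \leq -2e$, i.e. $mk \leq s + l - 2e - mn$, which is exactly the hypothesis. Taking square roots then yields $\Delta \ll 2^{-e}$.

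To finish, I would invoke the definition of a two-source deterministic extractor: since $Ext_2(U_{\mathcal{P}},U_{\mathcal{Q}})$ is $O(2^{-e})$-close to the uniform distribution on $\mathbb{F}_p^k$ whenever $U_{\mathcal{P}}$ and $U_{\mathcal{Q}}$ are uniform on $\mathcal{P}$ and $\mathcal{Q}$ respectively, $Ext_2$ qualifies as a $(k, O(2^{-e}))$-deterministic randomness extractor for $\mathcal{P}\times\mathcal{Q}$, as claimed.

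Since all the analytic work is already absorbed into the previous theorem (which in turn rests on Lemma~\ref{sums} and Lemma~\ref{base}), there is no real obstacle here beyond careful bookkeeping of the exponents. The only subtle point to watch is that the inequalities $p < 2^m$ and $r,t \geq 2^{|r|-1},2^{|t|-1}$ absorb the factor of $4$ in the denominator cleanly, so the constants line up with the stated inequality without an extra additive term.
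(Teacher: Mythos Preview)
Your argument is correct and is exactly the intended derivation: the paper states this corollary without proof, treating it as an immediate consequence of the preceding theorem, and your bookkeeping with $p<2^{m}$, $r\geq 2^{s-1}$, $t\geq 2^{l-1}$ is precisely how one fills in the details. There is nothing to add.
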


The following corollary states the equivalence of $Ext_2$ with $\mathcal{D}_k$ when $\mathcal{P} = \mathcal{Q}$.

\begin{corollary}
If $\mathcal{P} = \mathcal{Q}$ and $e$ is positive integer such that $$k \leq \frac{2l - 2e - mn}{m},$$ then $Ext_2$ is a $(k, O(2^{-e}))$-deterministic randomness extractor for $\mathcal{P}$ and generalizes the randomness extractor of Ciss \emph{et al.}
\end{corollary}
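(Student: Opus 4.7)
The plan is to deduce this corollary in exactly the same two-step pattern that was used for the analogous corollary after Theorem \ref{RECFP}: first specialize the preceding corollary to the case $\mathcal{P}=\mathcal{Q}$, and then identify the resulting two-source extractor with a one-source evaluation of $\mathcal{D}_k$.

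First I would set $\mathcal{Q}=\mathcal{P}$ in the previous corollary. This forces $r=t$, hence $s=l$, so the hypothesis $k\leq(l+s-2e-mn)/m$ collapses to the stated bound $k\leq(2l-2e-mn)/m$. Invoking that corollary verbatim then yields that $Ext_2$ is a $(k,O(2^{-e}))$-deterministic randomness extractor on $\mathcal{P}\times\mathcal{P}$, which is the first claim of the statement. No new exponential-sum estimate is required at this stage since all the analytic work has been absorbed into the preceding theorem and its corollary.

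For the second half (``generalizes the randomness extractor of Ciss \emph{et al.}''), the key observation is the pointwise identity $Ext_2(P,P)=\mathcal{D}_k(2P)$, which holds by the very definitions of the two functions since $P\oplus P=2P$ and $\mathcal{D}_k$ extracts the first $k$ coefficients of the abscissa. Provided $\#\mathcal{P}$ is odd, the doubling map $P\mapsto 2P$ is a bijection of $\mathcal{P}$, so $2U_{\mathcal{P}}$ is uniform on $\mathcal{P}$ and consequently
$$\Delta(Ext_2(U_{\mathcal{P}},U_{\mathcal{P}}),U_{\mathbb{F}_p^k})=\Delta(\mathcal{D}_k(U_{\mathcal{P}}),U_{\mathbb{F}_p^k}),$$
which is precisely the statistical distance bounded in the earlier lemma on $\mathcal{D}_k$. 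This exhibits $Ext_2$ as a strict extension of $\mathcal{D}_k$, exactly as was done for $\mathcal{L}_k$ in the previous section.

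The main obstacle is not analytic but a small definitional detail: one must be slightly careful about the uniformity of $2P$. For subgroups of odd order (the cryptographically relevant setting) the doubling map is a bijection and the reduction to $\mathcal{D}_k$ is immediate; for subgroups of even order one picks up a factor of at most $2$ or restricts to the image $2\mathcal{P}$, which affects only constants absorbed in the $O(2^{-e})$ notation. Beyond this book-keeping, the corollary is a direct corollary of the previous one together with the earlier bounds on $\mathcal{D}_k$.
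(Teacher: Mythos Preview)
Your proposal is correct and follows exactly the pattern the paper intends: the paper gives no explicit proof of this corollary, but the proof of the analogous corollary for $Ext_1$ (immediately after Theorem~\ref{RECFP}) proceeds by precisely the two steps you describe --- first specialising $\mathcal{P}=\mathcal{Q}$ in the preceding corollary, then observing $Ext_1(P,P)=\mathcal{L}_k(2P)$ and using that $2P$ is again uniformly random --- and the present corollary is clearly meant to be read the same way with $\mathcal{D}_k$ in place of $\mathcal{L}_k$. Your extra care about the bijectivity of doubling (odd versus even order) is a refinement the paper glosses over with ``since the point $2P$ is also random''.
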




\begin{thebibliography} {MM}


\bibitem{Ahmadi} O. Ahmadi, and  I. E. Shparlinski. Exponential Sums over Points of Elliptic Curves. arXiv preprint arXiv:1302.4210.
(2013)

\bibitem{Avanzi} R. Avanzi, H. Cohen, C. Doche, G. Frey, T. Lange, K. Nguyen and F. Vercauteren. \emph{Elliptic and hyperelliptic curve cryptography: Theory
and practice}, CRC Press, 2005.

\bibitem{randoracle} M. Bellare and P. Rogaway. \emph{Random oracles are practical : A Paradigm for designing efficient protocols}. In V. Ashby, editor, ACM CCS 93, pages 62-73. ACM Press, Nov. 1993.

\bibitem{Blake} I. Blake, G. Seroussi and N. Smart. \emph{Elliptic curves in cryptography}. London Math. Soc., Lecture Note Series, 265, Cambridge Univ. Press,
1999.


\bibitem{ddh} D. Boneh, \emph{The decision Diffie-Hellman problem}, Third Algorithmic Number Theory Symposium
(ANTS), vol.1423 of Lecture Notes In Computer Science, Springer,
1998

\bibitem{hardness} D. Boneh and R. Venkatesan. \emph{Hardness of computing the most significant bits of secret keys in Diffie-Hellman and related schemes.} In N. Koblitz, editor, CRYPTO'96, vol. 1109 of LNCS, pages 129-142. Springer, Aug. 1996.


\bibitem{op} C. Chevalier, P. Fouque, D. Pointcheval and S. Zimmer, \emph{Optimal Randomness Extraction from a
Diffie-Hellman Element}, Advances in Cryptology - Eurocrypt 2009,
Lecture Notes In Computer Science, vol.5479, 572--589,
Springer-Verlag, 2009

\bibitem{theseciss} A. A. Ciss. \emph{Arithm\'etique et Extracteurs d\'eterministes sur les courbes elliptiques}. Th\`ese de doctorat unique, 2012.

\bibitem{ciss1} A. A. Ciss, Djiby Sow. \emph{Randomness extraction in elliptic curves and secret key derivation at the end of Diffie-Hellman protocol}. Int. J. Appl. Cryptol. 2, 4 (July 2012), 360-365.

\bibitem{ciss2} A. A. Ciss and D. Sow.  \emph{On randomness extraction in elliptic curves}. In Proceedings of the 4th international conference on Progress in cryptology in Africa (AFRICACRYPT'11), Abderrahmane Nitaj and David Pointcheval (Eds.). Springer-Verlag, Berlin, Heidelberg, 290-297, 2011.


\bibitem{Dvir} Z. Dvir. \emph{Extractors for varieties}, Comput. Complex. (2012), 515--572

\bibitem{dh}  W. Diffie, M. Hellman, \emph{New Directions in Cryptography}, IEEE Transactions On Information Theory,
vol.22, no.6, 644--654, 1976

\bibitem{re} Y. Dodis, R. Gennaro, J. H\aa stad, H. Krawczyk, and T. Rabin, \emph{Randomness extraction and key
derivation using the CBC, cascade and HMAC modes}. In Matthew K.
Franklin, editor, Advances in Cryptology - CRYPTO 2004, vol. 3150
of Lecture Notes In Computer Science, 494--510, Springer 2004


\bibitem{nf} H. M. Edwards, \emph{A normal form for elliptic curves}, Bulletin of the American Mathematical Society
44 (2007), vol.48, no.177, 393--422,
http://www.ams.org/bull/2007-44-03/S0273-0979-07-01153-6/home.html


\bibitem{qe} R. R. Farashahi and R. Pellikaan, \emph{The Quadratic Extension Extractor for (Hyper)elliptic Curves in
Odd Characteristic}, Lecture Notes In Computer Science, Vol.4547,
219--236, 2007



\bibitem{be} R. R. Farashahi, A. Sidorenko and R. Pellikaan, \emph{Extractors for Binary Elliptic Curves}, Designs, Codes and Cryptography, Vol.94, 171--186, 2008




\bibitem{gu} N. G\"urel, \emph{Extracting bits from coordinates of a point of an elliptic curve},
Cryptology ePrint Archive,Report 2005/324,
http://eprint.iacr.org/, 2005


\bibitem{leftover2} R. Genaro, H. Krawczyk, and T. Rabin. \emph{Secure Hashed Diffie-Hellman on non-DDH groups}. In C. Cachin and J. Camenisch, editors, \emph{Eurocrypt 2004}, volume 3027 of \emph{LNCS}, pages 361-381. Springer, May 2004.


\bibitem{Handbook} \emph{Handbook of elliptic and hyperelliptic curve cryptography}, Discrete Math. Appl. (Boca Raton), Chapman  Hall/CRC, Boca Raton, FL, 2006.



\bibitem{pr} J. H\aa stad, R. Impagliazzo, L. Levin, and M. Luby, \emph{A pseudorandom generator from any one-way function}, SIAM Journal on Computing, Vol. 28, no.4,
1364-1396,1999

\bibitem{Koblitz}  N. Koblitz \emph{Guide to Elliptic Curve Cryptography} Springer Verlag,(2004)

\bibitem{Koblitz2}  N. Koblitz    \emph{Hyperelliptic Cryptosystems} Journal of Cryptology (1989) 1:139-150


\bibitem{oes} D. R. Kohel and I. E. Shparlinski, \emph{On Exponential Sums and Group Generators for Elliptic Curves over Finite Fields},
Lecture Notes In Computer Science, vol.1838, Springer-Verlag, Berlin, 395--404, 2000

\bibitem{Nisan1} N. Nisan. \emph{Extracting randomness: how and why. A survey}. Computational Complexity, 1996. Proceedings., Eleventh Annual IEEE Conference on , vol., no., pp.44,58, 24-27 May 1996

\bibitem{Nisan2} N. Nisan and A. Ta-Shma. \emph{Extracting Randomness: A Survey and New Constructions.}  J. Comput. Syst. Sci. 58(1):148-173 (1999)


\bibitem{rd} R. Shaltiel, \emph{Recent Developments in Explicit Constructions of Extractors}, Bulletin of the EATCS 77 (2002), 67--95; 2002

\bibitem{Shaltiel} R. Shaltiel. \emph{An introduction to randomness extractors}. In Proceedings of the 38th international conference on Automata, languages and programming - Volume Part II (ICALP'11), Luca Aceto, Monika Henzinger, and Jiri Sgall (Eds.), Vol. Part II. Springer-Verlag, Berlin, 2011, Heidelberg, 21-41.

\bibitem{Silverman} J. H. Silverman. \emph{The arithmetic of elliptic curves}, Springer-Verlag,
Berlin, 2009.

\bibitem{Shoup1} V. Shoup \emph{A Computational Introduction to Number Theory and Algebra } Cambridge University Press, Cambridge 2005.

\bibitem{Trevisan} L. Trevisan. \emph{Extractors and pseudorandom generators.} J. ACM 48, 4 (July 2001), 860-879, (2001).

\bibitem{ran} L. Trevisan and S. Vadhan, \emph{Extracting Randomness from Samplable Distributions}, IEEE Symposium on Foundations of Computer Science, 32--42,  2000

\bibitem{incomplete} A. Winterhof. \emph{Incomplete Additive Character Sums and Applications}. In D. Jungnickel and H. Niederreiter, editors. Finite Fields and Applications, pages 462-474. Springer-Velag 2001.


\end{thebibliography}
\end{document}